\newcommand {\sm} {\setminus}
\newcommand{\qed}{\relax\ifmmode\hskip2em\Box\else\unskip\nobreak\hfill$\Box$\fi}
\newtheorem{defeng}{Definition}[section]
\newtheorem{theorem}[defeng]{Theorem}
\newtheorem{lemma}[defeng]{Lemma}
\theoremstyle{break}\theorembodyfont{\rmfamily}}
\theoremstyle{break}\theorembodyfont{\rmfamily}}
\newcommand{\tp}{\!-\!}
\newcounter{claim}
 \newenvironment{proof}[1][]%
 {\noindent {\setcounter{claim}{0}\sc proof ---
    }{#1}{}}{\hfill$\Box$\vspace{2ex}}
\begin{document}

\title{The $k$-in-a-tree problem for graphs of girth at least~$k$}
\author{W. Liu\thanks{Universit\'e Grenoble 1 --- Joseph Fourier (France),
    email: wei@cmap.polytechnique.fr}  \ and  N. Trotignon\thanks{CNRS, LIAFA, Universit\'e Paris 7 ---
    Paris Diderot (France),\ email:
    nicolas.trotignon@liafa.jussieu.fr}\ }

\date{May 28, 2010}

\maketitle

\begin{abstract}
  For all integers $k\geq 3$, we give an $O(n^4)$ time algorithm for
  the problem whose instance is a graph $G$ of girth at least $k$
  together with $k$ vertices and whose question is ``Does $G$ contains
  an induced subgraph containing the $k$ vertices and isomorphic to a
  tree?''.

  This directly follows for $k=3$ from the three-in-a-tree algorithm
  of Chudnovsky and Seymour and for $k=4$ from a result of Derhy,
  Picouleau and Trotignon.  Here we solve the problem for $k\geq 5$.
  Our algorithm relies on a structural description of graphs of girth
  at least $k$ that do not contain an induced tree covering $k$ given
  vertices ($k\geq 5$).
\end{abstract}

\noindent AMS Mathematics Subject Classification: 05C75, 05C85, 05C05,
68R10, 90C35

\noindent Key words: tree, algorithm, three-in-a-tree, $k$-in-a-tree,
girth, induced subgraph.

\section{Introduction}

Many interesting classes of graphs are defined by forbidding induced
subgraphs, see~\cite{chudnovsky.seymour:excluding} for a survey.  This
is why the detection of several kinds of induced subgraphs is
interesting, see~\cite{leveque.lmt:detect} where many such problems
are surveyed.  In particular, the problem of deciding whether a graph
$G$ contains as an induced subgraph some graph obtained after possibly
subdividing prescribed edges of a prescribed graph $H$ has been
studied.  It turned out that this problem can be polynomial or
NP-complete according to $H$ and to the set of edges that can be
subdivided.  The most general tool for solving this kind of problems
(when they are polynomial) seems to be the \emph{three-in-a-tree}
algorithm of Chudnovsky and Seymour:

\begin{theorem}[see \cite{chudnovsky.seymour:theta}]
  \label{th:cs}
  Let $G$ be a graph and $x_1, x_2, x_3$ be three distinct vertices of
  $G$.  Deciding whether there exists an induced tree of $G$ that
  contains $x_1, x_2, x_3$ can be performed in time $O(n^4)$.
\end{theorem}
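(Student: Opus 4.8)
The plan is to reduce the existence question to detecting a single, highly structured object, and then to attack that object by graph decomposition. First I would observe that if an induced tree containing $x_1,x_2,x_3$ exists then a vertex-minimal one exists, and that a vertex-minimal induced tree through three vertices is a \emph{tripod}: there is a centre $c$ together with three induced paths $P_1,P_2,P_3$ from $c$ to $x_1,x_2,x_3$ that meet only at $c$ and have no edges between $P_i\setminus\{c\}$ and $P_j\setminus\{c\}$ for $i\neq j$. (This is just the Steiner tree of the three vertices inside the induced tree; being an induced subtree of a tree, the branch-independence is automatic.) Degenerate cases, where $c\in\{x_1,x_2,x_3\}$ or a path is trivial, reduce to an induced path through the three vertices. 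Thus the algorithmic goal becomes: decide whether $G$ contains an induced tripod on $x_1,x_2,x_3$.

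The main engine I would build is a \emph{structure theorem} describing the graphs that contain no such tripod. The route is to study an instance that is minimal with respect to a fixed list of answer-preserving reductions, and to show it must be ``basic''. The reductions I would use are the usual ones for induced-subgraph detection: (i) delete any vertex that provably lies in no minimal tripod, in particular vertices that cannot be linked to all three terminals without being forced to touch a neighbour of another branch; (ii) shrink a \emph{homogeneous set} (a module) $M$ with $M\cap\{x_1,x_2,x_3\}=\emptyset$, since a minimal tripod needs only a bounded number of the vertices of $M$ and the external interface of $M$ is constant; and (iii) decompose along small cutsets (clique cutsets, star cutsets, and their relatives), routing the three strands through the cutset. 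For each reduction I would prove the key invariant: $G$ has a tripod on $x_1,x_2,x_3$ if and only if the reduced instance does — or, for a cutset, if and only if at least one of the derived instances on the blocks, equipped with suitably chosen \emph{marker} terminals, does. This compatibility between the three-in-a-tree question and the decomposition is what makes the recursion sound.

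With the reductions in hand I would prove that an irreducible tripod-free instance belongs to an explicitly bounded family, in which membership can be tested and the (non)existence of a tripod decided directly in polynomial time. Turning the structure theorem into the $O(n^4)$ algorithm is then bookkeeping: repeatedly apply a reduction when one is available, recurse on the smaller graphs, and answer in the base cases. Since each reduction strictly shrinks the instance and can be found in polynomial time, and since the blocks of a cutset decomposition have total size controlled by $n$, a careful amortisation yields the stated bound.

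The hard part, as in all results of this kind, will be the structure theorem — specifically the case analysis showing that the \emph{only} obstructions surviving the reductions are the basic ones, together with the cutset compatibility of step (iii). The subtlety there is that the three strands of a tripod can cross a cutset in many configurations (which terminals sit on which side, whether the centre $c$ lies in the cutset, how many strands pass through it), so one must introduce exactly the right marker terminals on each block and verify both that a global tripod can always be reassembled from local ones and, conversely, that local solutions never combine into a spurious global one. I expect most of the technical length to sit in proving that the markers faithfully encode the crossing information; by comparison the reduction to tripods and the complexity analysis are routine.
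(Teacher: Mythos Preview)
The paper does not prove this statement at all: Theorem~\ref{th:cs} is quoted from Chudnovsky and Seymour~\cite{chudnovsky.seymour:theta} and used as a black box (only in the final Theorem~\ref{th:algo}, to cover the case $k=3$). There is therefore no proof in the paper to compare your proposal against.

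As a separate remark on the proposal itself: what you have written is a generic template for induced-subgraph detection (reduce to a canonical witness, posit a structure theorem for witness-free instances via modules, clique cutsets, star cutsets, then recurse) rather than an outline of the actual argument in~\cite{chudnovsky.seymour:theta}. The real Chudnovsky--Seymour proof does go through a structure theorem, but the decomposition is not along modules or clique/star cutsets in the way you suggest; the heart of their result is a specific characterisation of the tripod-free instances in terms of an explicit ``strip'' structure built over a frame resembling a line graph, and the algorithm tests membership in that structure directly. Your sketch leaves the content of the structure theorem and the identity of the ``basic'' class entirely unspecified, and the cutset/module reductions you list are neither obviously answer-preserving for three-in-a-tree nor the ones actually used. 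So while the high-level shape (structure theorem $\Rightarrow$ algorithm) is right, the proposal as written contains no identifiable step that would carry the proof, and the hard part you flag --- the structure theorem --- is precisely the part you have not attempted.
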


How to use three-in-a-tree is discussed
in~\cite{chudnovsky.seymour:theta} and further evidences of its
generality are given in~\cite{leveque.lmt:detect}.  The complexity of
four-in-a-tree is not known, and more generally of $k$-in-a-tree,
where $k\geq 4$ is a fixed integer.  But these problems are more
tractable when restrictions are given on the girth (length of a
smallest cycle) of the graph as suggested by Derhy, Picouleau and
Trotignon who proved:

\begin{theorem}[see \cite{nicolas.d.p:fourTree}]
  \label{th:dpt}
  Let $G$ be a triangle-free graph and $x_1, x_2, x_3, x_4$ be four
  distinct vertices of $G$. Deciding whether there exists an induced
  tree of $G$ that contains $x_1, x_2, x_3, x_4$ can be performed in time
  $O(nm)$.
\end{theorem}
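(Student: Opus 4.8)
The plan is to exploit the fact that a \emph{minimal} induced tree $T$ containing $x_1,x_2,x_3,x_4$ has a very restricted shape, and to turn the existence question into a bounded family of path-finding subproblems. First I would observe that in a minimal solution every leaf of $T$ is one of the four terminals (otherwise a leaf could be deleted, contradicting minimality), so $T$ has at most four leaves. A standard count in a tree, $\sum_v(\deg_T v-2)=-2$, then shows that the vertices of degree at least three (the \emph{branch vertices}) have total excess degree $\sum(\deg_T v-2)\le 2$; hence $T$ has no branch vertex, one branch vertex of degree $3$ or $4$, or exactly two branch vertices each of degree $3$. This yields a constant list of \emph{topologies}: (i) $T$ is an induced path carrying the four terminals in one of finitely many orders; (ii) $T$ is a ``spider'' with a single center $s$ from which induced paths reach the terminals, the center being either a terminal or a non-terminal Steiner point of degree $3$ or $4$, and possibly with one terminal lying on a leg; (iii) $T$ is an ``H'', with two centers $s,t$ joined by an induced path, each sending induced paths to two of the terminals (here $s,t$ may or may not be terminals, and a terminal may sit on the $s$--$t$ path).

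Second, for each fixed topology I would guess the branch vertices that are not already terminals. A spider needs one guessed center ($O(n)$ choices), an H needs two ($O(n^2)$ choices), and the assignment of terminals to legs and the partition into pairs cost only a constant number of further choices; topology (i) needs no guessed branch vertex but prescribes the order of the two interior terminals along the path. After this guessing, every topology reduces to one core task: given a constant number of prescribed endpoint pairs, find induced paths joining them whose union is an induced tree, i.e.\ the paths are internally disjoint, meet only at the prescribed shared endpoints, and carry no chord between distinct branches.

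The heart of the matter is this core task, and this is where triangle-freeness does the work. Each single leg can be sought as a shortest path, which is automatically induced; the real difficulty is ruling out chords \emph{between} distinct legs while still certifying existence in near-linear time. Here I would use that $G$ is triangle-free, so no vertex is adjacent to two consecutive vertices of a path: the forbidden adjacencies created by an already-committed leg are then local enough that chordlessness can be enforced by deleting the closed neighbourhoods of committed vertices before each BFS, and the cost of a BFS can be charged to the edges it scans. I expect the main obstacle to be exactly this simultaneous control of several induced paths: a naive search either miscounts cross-branch chords or blows up the running time. The two-center (H) topology is the worst case, both because it is the only one seemingly needing two guessed vertices---threatening an $O(n^2 m)$ bound rather than the target $O(nm)$---and because the $s$--$t$ path must avoid both attached pairs of legs at once. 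To reach $O(nm)$ I would try to avoid the double guess: fix only one center and grow the rest by a single BFS from the terminals, or argue that among all H-type solutions there is a canonical one in which the second center is determined (for instance as the last common vertex of two shortest leg-trees), collapsing the guess to $O(n)$. Verifying that such a canonical solution always exists, and that the cleaned BFS never discards a needed one, is the step I would expect to demand the most care.
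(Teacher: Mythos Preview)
The paper does not prove this theorem at all: Theorem~\ref{th:dpt} is quoted from \cite{nicolas.d.p:fourTree} and is used only as a black box in the final Theorem~\ref{th:algo} to cover the case $k=4$. The authors even state explicitly in the introduction that their proofs are independent from \cite{nicolas.d.p:fourTree} and neither use nor simplify it. So there is no ``paper's own proof'' to compare your proposal against; you are attempting to reprove an external result.

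That said, your sketch is not yet a proof, and the gaps are substantive rather than cosmetic. Your reduction boils every topology down to ``given a constant number of endpoint pairs, find induced paths whose union is an induced tree''. But this subproblem is not known to be easy: the induced disjoint paths problem is NP-hard in general, and your proposed fix --- run a BFS for each leg after deleting closed neighbourhoods of already-committed legs --- can discard vertices that every valid solution needs. Triangle-freeness prevents a vertex from seeing two \emph{consecutive} vertices of a committed leg, but it does not prevent it from seeing one, and deleting the whole closed neighbourhood of a leg can disconnect the only route to the next terminal. You acknowledge this (``the cleaned BFS never discards a needed one \dots\ is the step I would expect to demand the most care''), but that is precisely the heart of the matter, not a detail. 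Likewise, for the H-topology you correctly identify that two guessed centres give $O(n^2 m)$ rather than $O(nm)$, and you propose to collapse one guess by a ``canonical'' choice; but you give no argument that such a canonical second centre exists, and in fact constructing one is where the real structural work of \cite{nicolas.d.p:fourTree} lies. The actual proof in \cite{nicolas.d.p:fourTree} proceeds, like the present paper does for $k\ge 5$, by a structural characterisation of the no-instances rather than by topology enumeration plus greedy path-finding.
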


Here, we study $k$-in-a-tree for graphs of girth at least $k$.  Note
that the problem is solved by the two theorems above for $k=3$ and
$k=4$.  For $k\geq 5$, we follow the method that has been already
succesful for Theorems~\ref{th:cs} and~\ref{th:dpt}: studying the
structure of a graph that does not contain the desired tree. It turns
out that in most of the cases, the structure is simple.  Note that the
proofs in the present work are independent
form~\cite{chudnovsky.seymour:theta,nicolas.d.p:fourTree}: we do not
use results from~\cite{chudnovsky.seymour:theta,nicolas.d.p:fourTree},
and as far as we can see, our results do not
simplify~\cite{chudnovsky.seymour:theta}
or~\cite{nicolas.d.p:fourTree}.

\begin{figure}[h]
\begin{center}
\includegraphics{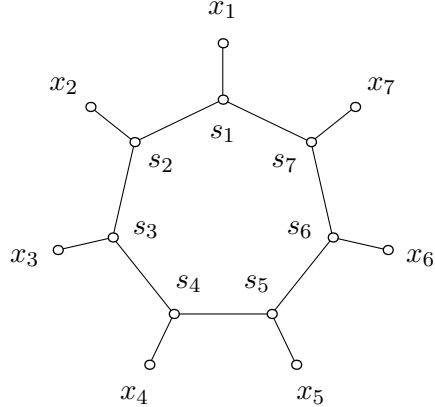}
\caption{a $k$-structure ($k=7)$\label{f:cycle}}
\end{center}
\end{figure}

We call \emph{$k$-structure} any graph obtained from the cycle on $k$
vertices by adding a pending path to each vertex of cycle, see
Section~\ref{sec:k} for a formal definition.  An example is shown in
Figure~\ref{f:cycle} which obviously does not contain an induced tree
covering the $k$ pending vertices.  The main result of
Section~\ref{sec:link} states that for $k\geq 3$, a graph of girth at
least $k$ that does not contain an induced tree covering $k$ given
vertices \emph{must contain} a $k$-structure.  The main result of
Section~\ref{sec:k} states that (with one exception, see below), if
the graph contains a $k$-structure, then the $k$-structure
\emph{decomposes} the graph, meaning that every vertex of the original
cycle is a cut-vertex of the graph.

\begin{figure}[h]
  \begin{center}
    \includegraphics{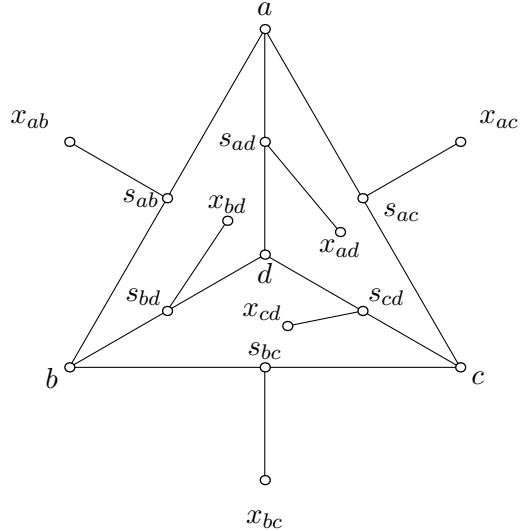}
    \caption{a $K_4$-structure \label{f:k4s}}
  \end{center}
\end{figure}

But there is a noteworthy exception that arises curiously only when
$k=6$.  The graph $G$ on Figure~\ref{f:k4s} is obtained from $K_4$ by
subdividing all edges once, and by adding a pending path to each
vertex of degree~2. This graph has girth 6.  Let $H$ be a connected
induced subgraph of $G$ that contains the 6 pending vertices.  We
claim that $H$ contains at least three vertices of degree 3 in $G$.
Otherwise, it does not contains at least 2 of them, so the pending vertex
whose neighbor is between these is isolated; a contradiction.  Hence, $H$
contains three vertices of degree 3 and a cycle of length 6 goes
through them.  Hence, no induced tree of $G$ can cover the 6 pending
vertices.  This is what we call a \emph{$K_4$-structure}.  The main
result of Section~\ref{sec:K4} states roughly that if a graph of
girth~6 contains a $K_4$-structure and if no induced tree covers the 6
pending vertices then the $K_4$-structure \emph{decomposes} the graph,
meaning that every pair of vertices of the original $K_4$ and every
vertex of degree~2 arising from the subdivisions is a cutset of the
graph.

Let us sum up the results.  Our main result, Theorem~\ref{th:main},
states that when $k\geq 5$, and $G$ is a connected graph of girth at
least $k$ together with $k$ vertices then either $G$ contains a
$k$-structure that decomposes $G$, or $k=6$ and $G$ contains a
$K_4$-structure that decomposes $G$, or $G$ contains an induced tree
covering the $k$ vertices.  All this leads to an $O(n^4)$-time
algorithm that decides whether a graph of girth at least $k$ contains
an induced tree that covers $k$ prescribed vertices.

\subsection*{Notation, convention, remarks}

We use standard notation from~\cite{gibbons:agt}.  Since we use only
induced subgraphs, we say that $G$ \emph{contains} $H$ when $H$ is an
induced subgraph of $G$.  Also, by \emph{tree of $G$} we mean induced
subgraph of $G$ that is a tree.  By \emph{path} we mean induced path.
In complexity of algorithms, $n$ stands for the number of vertices of
the input graph and $m$ for the number of its edges.  We call
\emph{terminal} of a graph any vertex of degree one.  Solving
$k$-vertices-in-a-tree or $k$-terminals-in-a-tree are equivalent
problems, because if $k$ vertices $x_1, \dots, x_k$ of graph $G$ are
given, we build the graph $G'$ obtained from $G$ by adding a pending
neighbor $y_i$ to $x_i$, $i=1, \dots, k$.  An induced tree of $G$
covers $x_1, \dots, x_k$ if and only an induced tree of $G'$ covers
$y_1, \dots, y_k$.  Hence, in the rest of the paper we assume for
convenience that the vertices to be covered are all terminals.

\section{Linking a vertex to a tree}
\label{sec:link}

Recall that a \emph{terminal} in a graph is a vertex of degree~1.  A
\emph{branch-vertex} is a vertex of degree at least~3.  The following
is a basic fact whose proof is omited.

\begin{lemma}
  \label{l:descTree}
  A tree $T$ with $k$ terminals contains at most $k-2$
  branch-vertices.  Moreover if $T$ contains exactly $k-2$
  branch-vertices then every branch-vertex is of degree 3.
\end{lemma}

\begin{lemma}
  \label{l:linkTree}
  Let $k, l$ be integers such that $k\geq 3$ and $2 \leq l \leq k$.
  Let $G$ be a graph of girth at least $k$ and $x_1, \dots, x_l$ be
  $l$ distinct terminals of $G$.  Let $T$ be an
  induced tree of $G$ whose terminals are $x_1, \dots, x_{l-1}$.  Let
  $Q$ be a path from $x_l$ to $w$ such that $w$ has at least one
  neighbor in $T$ and no vertex of $Q\sm w$ has neighbors in $T$. Then
  one and only one of the following outcomes holds:

  \begin{itemize}
  \item $T \cup Q$ contains a tree of $G$ that covers $x_1, \dots,
    x_l$.
  \item $k=l$. Moreover, $T$ and $Q$ can be described as follows (up
    to a relabelling of $x_1, \dots, x_{k-1}$):
    \begin{enumerate}
    \item $T$ is the union of $k-1$ vertex-disjoint paths $s_1 \tp
      \cdots \tp x_1$, $s_2 \tp \cdots \tp x_2$, \dots, $s_{k-1} \tp
      \cdots \tp x_{k-1}$;
    \item the only edges between these paths are such that $s_1\tp s_2
      \tp \cdots \tp s_{k-1}$ is a path;
    \item $N_T(w) = \{s_1, s_{k-1}\}$.
    \end{enumerate}
  \end{itemize}

  This is algorithmic in the sense that when $T$ and $Q$ are given,
  the tree of the first outcome or the relabelling of the second can
  be computed in time $O(n^3)$.
\end{lemma}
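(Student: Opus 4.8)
The plan is to turn the statement into a vertex-deletion problem on $T$, and then to control the obstructions by playing the girth lower bound against the branch-vertex upper bound of Lemma~\ref{l:descTree}, sharpened by the hypothesis $l\le k$. First I would record two elementary facts. Since each $x_i$ ($i\le l-1$) is a terminal of $G$, its unique $G$-neighbour lies in $T$, so no $x_i$ is adjacent to $w$; hence every vertex of $N_T(w)$ has degree at least $2$ in $T$. And for any two vertices $a,a'\in N_T(w)$, the unique path of $T$ between them together with $w$ forms a cycle of $G$, so by the girth hypothesis this path has length at least $k-2$. For mutual exclusivity, observe that if the second outcome holds then, by conditions (1)--(3) together with the hypothesis on $Q$, the graph $T\cup Q$ has no edges besides those of the $k-1$ pendant paths, the spine $s_1\tp\cdots\tp s_{k-1}$, the two edges $ws_1,ws_{k-1}$, and $Q$; thus $T\cup Q$ is exactly a $k$-structure on the terminals $x_1,\dots,x_k$, which (as noted in the introduction) contains no induced tree covering all of them. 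So the first outcome then fails, and the two outcomes are exclusive.

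For the main direction I would reduce to a separation problem. Since $Q$ meets $T$ only at $w$, both $w$ and all of $Q$ must be kept; so producing the tree of the first outcome is equivalent to finding a set $D$ of non-terminal vertices of $T$ such that each component of $T-D$ contains \emph{exactly one} vertex of $N_T(w)\sm D$. Indeed, reattaching $w$ to the one surviving neighbour in each component and appending $Q$ then yields a connected, acyclic, induced subgraph covering $x_1,\dots,x_l$ (induced because $T$ is induced, $Q$ touches it only at $w$, and no two surviving neighbours of $w$ lie in a common component, so no cycle through $w$ survives). Call a vertex $u$ of the subtree $S$ spanning $N_T(w)$ \emph{safely deletable} if removing it from $T$ disconnects no terminal from $N_T(w)$; every degree-$2$ (in $T$) internal vertex of $S$ is safely deletable, whereas a branch-vertex of $T$ carrying a pendant path to a terminal is not.

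Next I would show the separation succeeds unless the rigid configuration occurs. Take a maximal safe $D$; if it fails, some component still holds two vertices $a,a'\in N_T(w)$ that cannot be further separated, which forces every internal vertex of the $T$-path $P$ between $a$ and $a'$ to be a branch-vertex (a degree-$2$ internal vertex would be safely deletable, and $a,a'$ themselves are not, as they carry pendant terminals). By the girth bound $P$ has at least $k-3$ internal vertices, so $T$ has at least $k-3$ branch-vertices; by Lemma~\ref{l:descTree} it has at most $l-3\le k-3$. Hence every inequality is tight: $l=k$, $P$ has length exactly $k-2$, its $k-3$ internal vertices are all the branch-vertices of $T$ and each has degree $3$, and the endpoints $a,a'$ have degree $2$. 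Reading off where the $k-1$ terminals attach---one pendant path at each internal branch-vertex of $P$ and one at each of $a,a'$---identifies $T\cup Q$ as the required $k$-structure, with spine $P$ and $w$ in the role prescribed, giving conditions (1)--(3) after relabelling.

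The main obstacle is the last step, namely ruling out $\lvert N_T(w)\rvert\ge 3$ so that precisely the single-cycle $k$-structure (and nothing with an extra chord, a theta, or a further neighbour of $w$) can remain. Here both hypotheses are needed: if a third neighbour $a''$ attached inside $P$ we would get three internally disjoint $w$-$u$ paths, and the girth bound on the two shortest of them would force $k\le 4$; and if the junction vertex of $a''$ carried its own pendant terminal, the terminal count would exceed $l-1$, contradicting $l\le k$. In every remaining case the third neighbour is either separable by a safe degree-$2$ deletion or its junction vertex is droppable, so the separation goes through. Finally the separation is computed greedily---repeatedly deleting a safe degree-$2$ vertex lying on a path between two neighbours of $w$ and testing the result---so the tree of the first outcome, or the certificate of the second, is produced well within $O(n^3)$ time.
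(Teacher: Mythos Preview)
Your overall strategy matches the paper's: reduce to deleting vertices of $T$ so that each surviving component contains a single neighbour of $w$, and when this fails, play the girth lower bound on the $a$--$a'$ path against the branch-vertex upper bound of Lemma~\ref{l:descTree} to force $l=k$ and pin down the spine $s_1\!-\!\cdots\!-\!s_{k-1}$. The paper phrases the successful case via ``basic paths'' (subpaths of $T$ between two vertices of $N_T(w)$ with no interior vertex in $N_T(w)$) rather than a maximal safe set $D$, but the dichotomy and the counting are the same.

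The gap is your last paragraph, where you dispose of a third neighbour $a''$. Once the spine-plus-pendants structure of $T$ is established, any further $a''\in N_T(w)$ must lie on some pendant path $P_j$, at distance $d\ge 1$ from $s_j$. Your claim that the girth bound on ``the two shortest'' of the three $w$--$s_j$ paths forces $k\le 4$ is false in general: those paths have lengths $j$, $k-j$, and $d+1$, and for $d\ge 2$ all three pairwise cycles can have length $\ge k$ with $k$ arbitrary. The paper does \emph{not} derive a contradiction here; it explicitly \emph{constructs} a covering tree (take, for each $i$, the subpath of $P_i$ from $x_i$ to the neighbour of $w$ nearest $x_i$, or to $s_i$ if none, together with $Q$ and the spine, then delete one suitable $s_j$). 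Only in the residual case $a''=s'_j$ (that is, $d=1$) does the girth argument actually yield $k\le 4$, and then a different explicit tree is exhibited. Your alternative justifications---that ``the terminal count would exceed $l-1$'' and that ``the junction vertex is droppable, so the separation goes through''---are, as written, respectively incorrect (the junction $s_j$ already carries the pendant terminal $x_j$; no overcount arises) and unproven within your maximal-$D$ framework: you would have to argue that $s_j$ can be added to the \emph{already chosen} maximal $D$ while keeping every terminal connected to $N_T(w)$, and you have not done so.
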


\begin{proof}
  Clearly, at most one of the outcomes holds (because if the second
  holds then no tree of $T\cup Q$ can cover $x_1, \dots, x_l$).  Let
  us prove that at least one of the outcomes holds.
  
  Let $W = \{w_1, \dots, w_i\}$ be the set of the neighbors of $w$ in
  $T$.  If $i=1$ then $T\cup Q$ is a tree that covers $x_1, \dots,
  x_{l}$ so let us suppose that $i\geq 2$.  Let us call a \emph{basic
    path} any subpath of $T$ linking two distinct vertices of $W$ and
  with no interior vertices in $W$.  All the basic paths are on at
  least $k-1$ vertices because the girth of $G$ is at least $k$.  Now
  we consider two cases:

  \noindent{\bf Case 1:} for all basic paths $R$ of $T$ there exists
  an interior vertex $v_R$ of $R$ that has degree two in $T$.  Then,
  let $S \leftarrow T\cup Q$.  For all basic paths $R$, if $R\subseteq
  S$, then let $v_R$ be a vertex of degree two (in $T$) of $R$, let $S
  \leftarrow S\sm \{v_R\}$ and go the next path $R$.  At the end of
  this loop, one vertex of degree two is deleted from all basic paths.
  Remark that one vertex $v_R$ can be contained in several basic
  paths.  Hence, $S$ contains no more cycle, but is still connected
  because the deleted vertices have all degree~2 and exactly one is
  deleted in each basic path.  Hence, we obtain a tree $S$ that covers
  $x_1, \dots, x_{l}$.  This takes time $O(n^3)$ because we enumerate
  all the pairs $w_i, w_j$ to find the basic paths.

  \noindent{\bf Case 2:} we are not in Case~1, so there exists a basic
  path $R$ whose interior vertices are all of degree at least~3 in
  $T$.  Then, since $T$ has $l-1$ terminals, Lemma~\ref{l:descTree}
  says that it has at most $l-3$ branch-vertices.  On the other hand,
  since a basic path is on at least $k-1$ vertices (because the girth
  is at least~$k$), $R$ contains at least $k-3$ branch-vertices of
  $T$.  So in fact, because $l\leq k$, we have $k=l$ and $R$ contains
  all the $k-3$ branch-vertices of $T$.  Since $R$ has no interior
  vertex of degree~2, in fact $R$ contains $k-1$ vertices.  We name
  $s_1,\cdots ,s_{k-1}$ the vertices of $R$.  Note that $w$ is
  adjacent to $s_1$ and $s_{k-1}$ because $R$ is a basic path.  In
  particular, $s_1$ and $s_{k-1}$ are not terminals of~$G$.

  For all $1\leq i \leq k-1$, $s_i$ is a cutvertex of $T$ that
  isolates one terminal among $x_1, \dots, x_{k-1}$ from all the other
  terminals.  Up to relabelling, we suppose that this terminal is
  $x_i$.  We name $P_i$ the unique path of $T$ between $x_i$ and
  $s_i$.

  Note that $w$ is not adjacent to $s_2, \dots, s_{k-2}$ (because $R$
  is a basic path).  So the second outcome of our lemma holds, unless
  $w$ has at least one neighbor in some $P_i \sm s_i$.  For $i=1,
  \dots, k-1$ , we let $s'_i$ be the neighbor of $s_i$ along $P_i$, if
  $w$ has a neighbor in $P_i$ then we name $w_i$ the neighbor of $w$
  closest to $x_i$ along $P_i$ and if no such neighbor exists, we put
  $w_i = s_i$.

  Suppose that for all $i=1, \dots, k-1$ we have $w_i\neq s'_i$.
  Then, the paths $x_i \tp P_i \tp w_i$, $i=1, \dots, k-1$ together
  with $Q$ and $s_1, \dots, s_{k-1}$ form a graph with a unique cycle:
  $w s_1 \dots s_{k-1}w$.  By deleting a vertex $s_j$ such that
  $w_j\neq s_j$, we obtain a tree that covers $x_1, \dots, x_k$.

  Hence, we may assume that for some $i$, $w_i=s'_i$ and up to symmetry
  we suppose $i\leq k/2$.  Then $ws_1\dots s_i s'_iw$ is a cycle on
  $i+2$ vertices, so $i+2\geq k$ because of the girth.  Hence, $k-2\leq
  k/2$, so $k\leq 4$.  Then the paths $x_j \tp P_j \tp w_j$, $j=1,
  \dots, k-1$, together with $Q$ form a tree that covers $x_1, \dots,
  x_k$.
\end{proof}

A graph is a \emph{$k$-structure} with respect to $k$ distinct
terminals $x_1, \dots, x_k$ if it is made of $k$ vertex-disjoints paths
of length at least one $P_1 = x_1 \tp \cdots \tp s_1$, \dots, $P_k =
x_k\tp \cdots \tp s_k$ such that the only edges between them are
$s_1s_2$, $s_2s_3$, \dots, $s_{k-1}s_k$, $s_ks_1$.

\begin{lemma}
  \label{l:firststep}
  Let $k\geq 3$ be an integer.  Let $G$ be a connected graph of girth
  at least $k$ and $x_1, \dots, x_l$ be $l$ terminals where $1\leq l \leq
  k$.  Then either $G$ contains a tree that covers the $l$ terminals
  or $l=k$ and $G$ contains a $k$-structure with respect to $x_1,
  \dots, x_k$.

  This is algorithmic in the sense that we provide an $O(n^4)$
  algorithm that finds the tree or the $k$-structure. 
\end{lemma}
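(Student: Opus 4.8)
The plan is to induct on $l$, building an induced tree one terminal at a time and invoking Lemma~\ref{l:linkTree} to absorb each new terminal. For the base case $l=1$, the single vertex $x_1$ is itself an induced tree covering $x_1$, so the first outcome holds. For the inductive step, suppose $2\le l\le k$ and that the statement holds for $l-1$. Since $l\le k$ forces $l-1<k$, the second ($k$-structure) outcome of the induction hypothesis is \emph{impossible}; hence $G$ contains an induced tree covering $x_1,\dots,x_{l-1}$. This observation is the crux of the argument: it guarantees we always have a tree in hand to feed to Lemma~\ref{l:linkTree}, and a $k$-structure can only ever emerge at the very last step, when $l$ reaches $k$.

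Before applying Lemma~\ref{l:linkTree} I first replace the tree given by the induction hypothesis by the minimal subtree $T$ spanning $x_1,\dots,x_{l-1}$. Being a connected subgraph of an induced tree of $G$, $T$ is again an induced tree of $G$, and since each $x_i$ has degree one in $G$, the terminals of $T$ are exactly $x_1,\dots,x_{l-1}$, as Lemma~\ref{l:linkTree} requires. As $x_l$ is a terminal distinct from these, it cannot lie in $T$. To obtain the path $Q$, I take a shortest path in $G\sm V(T)$ from $x_l$ to a vertex $w$ adjacent to $T$; connectedness of $G$ guarantees such a $w$ is reachable, shortest paths are induced, and minimality ensures no vertex of $Q\sm w$ has a neighbor in $T$. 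Thus $T$ and $Q$ satisfy the hypotheses of Lemma~\ref{l:linkTree}.

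Lemma~\ref{l:linkTree} now yields one of two outcomes. If $T\cup Q$ contains a tree covering $x_1,\dots,x_l$, the induction proceeds (and if $l$ is the target value we are done). Otherwise $l=k$ and we are in the second outcome: $T$ decomposes into disjoint paths $s_i\tp\cdots\tp x_i$ ($1\le i\le k-1$) whose only interconnections form the path $s_1\tp\cdots\tp s_{k-1}$, with $N_T(w)=\{s_1,s_{k-1}\}$. Setting $s_k:=w$, the vertices $s_1,\dots,s_{k-1},s_k$ induce exactly the cycle $s_1\tp\cdots\tp s_{k-1}\tp s_k\tp s_1$, and attaching the pending paths $P_i=s_i\tp\cdots\tp x_i$ together with $P_k=Q=s_k\tp\cdots\tp x_k$ exhibits a $k$-structure with respect to $x_1,\dots,x_k$. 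Verifying this is the one place demanding care: I must check each path has length at least one, i.e.\ $s_i\ne x_i$ for all $i$. For the interior vertices $s_2,\dots,s_{k-2}$ this holds because they are branch-vertices of $T$ and so have a neighbor off the cycle; for $s_1$ and $s_{k-1}$ it follows from the fact, noted in the proof of Lemma~\ref{l:linkTree}, that they are not terminals of $G$; and $w\ne x_k$ since $w$ has two neighbors $s_1,s_{k-1}$ while $x_k$ has degree one.

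For the complexity, the loop runs at most $l\le k$ times; each iteration performs one pruning and one shortest-path computation, both cheap, followed by a single call to Lemma~\ref{l:linkTree} costing $O(n^3)$, giving the claimed $O(n^4)$ bound. I expect the only genuine subtlety to be the bookkeeping of the second outcome of Lemma~\ref{l:linkTree} as a bona fide $k$-structure; everything else is a routine induction driven by the impossibility of a premature $k$-structure when $l<k$.
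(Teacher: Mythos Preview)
Your proof is correct and follows essentially the same approach as the paper: induct on $l$, note that the second outcome of the induction hypothesis is impossible when $l-1<k$, extend to $x_l$ via a minimal path $Q$, and invoke Lemma~\ref{l:linkTree}. You are in fact slightly more careful than the paper in two places---you prune to the minimal subtree so that its terminals are exactly $x_1,\dots,x_{l-1}$ (which Lemma~\ref{l:linkTree} literally requires), and you explicitly verify that each $P_i$ has length at least one in the resulting $k$-structure---both of which the paper glosses over.
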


\begin{proof}
  We suppose that $k$ is fixed and we prove the statement by induction
  on $l$.  For $l=1$ and $l=2$, the lemma is clear: a tree exists (for
  instance, a shortest path linking the two terminals).  Suppose the
  lemma holds for some $l-1<k$ and let us prove it for~$l$.  By the
  induction hypothesis there exists an induced tree $T$ of $G$ that
  covers $x_1, \dots, x_{l-1}$.  Let $Q$ be a path from $x_l$ to some
  vertex $w$ that has neighbors in $T$, and suppose that $Q$ is
  minimal with respect to this property.  Then, no vertex of $Q\sm w$
  has a neighbor in $T$.

  We apply Lemma~\ref{l:linkTree}.  If the first outcome holds, we
  have our tree.  Otherwise, $T\cup Q$ is a $k$-structure.  All this can be
  implemented in time $O(n^4)$ because terminals are taken one by one,
  there are at most $n$ of them and for each of them we rely on basic
  subroutines like BFS (Breadth First Search, see~\cite{gibbons:agt})
  to find $Q$ and on the $O(n^3)$ algorithm of Lemma~\ref{l:linkTree}.
\end{proof}

\section{The $K_4$-structure}
\label{sec:K4}

A graph is a \emph{$K_4$-stucture} with respect to 6 distinct
terminals $x_{ab}, x_{ac}, x_{ad}, x_{bc}, x_{bd}, x_{cd}$ if it is
made of 6 vertex-disjoints paths of length at least one $P_{ab} =
x_{ab}\tp \cdots \tp s_{ab}$, $P_{ac} = x_{ac}\tp \cdots \tp s_{ac}$,
$P_{ad} = x_{ad}\tp \cdots \tp s_{ad}$, $P_{bc} = x_{bc}\tp \cdots \tp
s_{bc}$, $P_{bd} = x_{bd}\tp \cdots \tp s_{bd}$, $P_{cd} = x_{cd}\tp
\cdots \tp s_{cd}$ and four vertices $a, b, c, d$ such that the only
edges between them are $as_{ab}$, $as_{ac}$, $as_{ad}$, $bs_{ab}$,
$bs_{bc}$, $bs_{bd}$, $cs_{ac}$, $cs_{bc}$, $cs_{cd}$, $ds_{ad}$,
$ds_{bd}$, $ds_{cd}$.  (See Figure~\ref{f:k4s}.)  We put $X= \{x_{ab},
x_{ac}, x_{ad}, x_{bc}, x_{bd}, x_{cd}\}$.

We use the following ordering of the vertices $a$, $b$, $c$, $d$: $a <
b < c < d$.  We say that a $K_4$-structure $K$ in a graph $G$
\emph{decomposes} $G$ if the two following conditions hold:
\begin{enumerate}
\item\label{i:d1} for all $i, j$ such that $a\leq i < j \leq d$, $\{i, j\}$ is a
 cutset of $G$ that separates $x_{ij}$ from $X\sm \{x_{ij}\}$;
\item\label{i:d2} for all $i, j$ such that $a\leq i< j \leq d$,
  $\{s_{ij}\}$ is a cutset of $G$ that separates $x_{ij}$ from $X\sm
  \{x_{ij}\}$.
\end{enumerate}

\begin{lemma}
  \label{l:k4}
  If a graph $G$ of girth 6 contains a $K_4$-structure $K$ with
  respect to 6 terminals $x_{ab}, x_{ac}, x_{ad}, x_{bc}, x_{bd},
  x_{cd}$ then one and only one of the following outcomes holds:
  \begin{itemize}
  \item $K$ decomposes $G$;
  \item $G$ contains a tree that covers $x_{ab}, x_{ac}, x_{ad},
    x_{bc}, x_{bd}, x_{cd}$.
  \end{itemize}
  This is algorithmic in the sense that if $K$ is given, testing
  whether $K$ decomposes $G$ or outputing the tree can be performed in
  time $O(n^4)$.
\end{lemma}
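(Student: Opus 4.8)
The plan is to show the dichotomy: either $K$ decomposes $G$, or we can construct an induced tree covering the six terminals. Since $K$ is given explicitly (the six paths $P_{ij}$ and the four branch-vertices $a,b,c,d$), the structural information we need is how the rest of $G$ attaches to $K$. The key observation is that any path in $G$ between two terminals that avoids going ``through'' the prescribed cutsets must create a way to reconnect the structure into a tree. So I would systematically test whether each of the candidate cutsets in conditions~\ref{i:d1} and~\ref{i:d2} actually separates $x_{ij}$ from the other terminals.

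First I would set up the testing procedure. For each pair $\{i,j\}$ with $a\leq i<j\leq d$, I would delete $\{i,j\}$ from $G$ and run a connectivity check (BFS) to decide whether $x_{ij}$ is still connected to $X\sm\{x_{ij}\}$; similarly for each single vertex $s_{ij}$. If all twelve tests confirm separation, then $K$ decomposes $G$ and we are done with the first outcome. Otherwise, some test fails, and I would use that failure to build a tree. Since the two outcomes are mutually exclusive (a decomposing $K_4$-structure forces every induced connected subgraph covering $X$ to contain a cycle of length~6 through three branch-vertices, exactly as argued in the introduction for Figure~\ref{f:k4s}), confirming this incompatibility should be a short preliminary remark.

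The substantive work is the ``otherwise'' direction: from a failed separation, produce the tree. The idea is that a failure gives a path $R$ in $G\sm\{i,j\}$ (or in $G\sm s_{ij}$) connecting $x_{ij}$ to some other terminal while avoiding the relevant cutset. This extra path provides an alternate route that lets us detour around one of the three branch-vertices on the hexagonal cycle $s_{ab}s_{bc}s_{cd}\cdots$ of the $K_4$-structure, so that we can delete a carefully chosen branch-vertex (or edge) to break the cycle while keeping all six terminals connected. I would case-analyze according to which cutset fails and where the reconnecting path $R$ attaches, using the girth~$6$ hypothesis to control the interaction of $R$ with $K$ (short chords would create short cycles, contradicting girth~$6$). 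In spirit this mirrors the tree-construction in Case~1 of Lemma~\ref{l:linkTree}, where deleting one interior vertex per basic path destroys all cycles while preserving connectivity.

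The main obstacle I anticipate is the case analysis in the reconnection step: there are many ways $R$ can attach to $K$ (to the interior of a path $P_{ij}$, to a branch-vertex, or to one of the $s_{ij}$), and in each case one must verify both that the resulting subgraph is a tree (acyclic and connected) and that it is \emph{induced}, i.e.\ has no unwanted chords. The girth constraint is the crucial tool for ruling out chords, but applying it uniformly across all attachment patterns will require care, and I expect the bulk of the proof to consist of this bookkeeping. The complexity bound $O(n^4)$ follows easily once the logic is settled, since each of the constantly many connectivity tests costs $O(n+m)$ and the tree reconstruction is a handful of BFS calls.
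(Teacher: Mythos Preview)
Your high-level plan --- test the twelve cutsets, and if one fails use the witnessing path to build the tree --- is the same skeleton the paper uses, and your remark on mutual exclusivity is exactly how the paper argues it. However, two concrete devices that do the real work in the paper's proof are absent from your sketch, and without them the ``bookkeeping'' you anticipate does not obviously terminate in a finite case analysis.

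First, the paper does \emph{not} test decomposition directly in $G$. It proceeds incrementally: start with $H=K$ (which $K$ trivially decomposes), and add vertices of $G\sm H$ one at a time, maintaining the invariant that $K$ decomposes $H$. When adding a single vertex $v$ breaks the invariant, the offending path $Q$ is forced to pass through that one new vertex $v$, and everything else in $H$ is already controlled by the decomposition. In your version, a failed cutset in $G$ yields a path $R$ that may wander through arbitrarily many vertices of $G\sm K$, each with its own adjacencies into $K$; the girth bound alone does not obviously tame all of these at once.

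Second, and more importantly, the paper does not do an ad~hoc case analysis on attachment points. When Condition~\ref{i:d1} fails for $\{a,b\}$, it observes that $K' = K\sm(P_{ab}\cup\{a,b\})$ is an induced tree whose terminals are exactly $X\sm\{x_{ab}\}$, takes a shortest path $Q$ from $x_{ab}$ to a vertex $w$ with a neighbour in $K'$, and then \emph{applies Lemma~\ref{l:linkTree} as a black box} to the pair $(K',Q)$. That lemma either returns the desired tree immediately, or pins down $N_{K'}(w)$ to a single configuration (two specific degree-$2$ vertices of $K'$), after which one explicit tree is written down. You reference Case~1 of the \emph{proof} of Lemma~\ref{l:linkTree} as inspiration, but the paper uses the lemma's \emph{statement} as the structural hammer; this is what collapses the case analysis you are worried about to essentially one line.
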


\begin{proof}
  Let us first check that at most one of the output holds. Suppose
  that the first outcome holds, and let $H$ be a connected induced
  subgraph of $G$ covering $X$. Then $H$ must contain at least three
  vertices among $a, b, c, d$, because if it fails to contain two of
  them, say $a, b$, then $x_{ab}$ is isolated from the rest of the
  graph because of Condition~\ref{i:d1}.  Hence, we may assume that $H$
  contains $a, b, c$.  Also, because of Condition~\ref{i:d2}, $H$ must
  contain $s_{ab}$, $s_{bc}$ and $s_{ac}$.  Hence, $H$ contains the cycle
  $as_{ab}bs_{bc}cs_{ac}a$.  Hence, $H$ cannot be a tree, so the
  second outcome fails.

  Now let $H$ be an induced subgraph of $G$ that contains $K$ and such
  that $K$ decomposes $H$ ($H$ exists since $K$ decomposes $K$).  We
  show that for any vertex $v$ of $G\sm H$, $H\cup \{v\}$ either is
  decomposed by $K$ or contains a tree covering $X$.  This will prove
  the theorem by induction and will be the description of an $O(n^4)$
  algorithm since for each $v$, the proof gives the way to actually
  build the tree when there is one by calling the algorithm of
  Lemma~\ref{l:linkTree} and searching the graph (with BFS for
  instance).  Note also that testing whether $K$ decomposes some graph
  can be performed in linear time by 12 checks of connectivity.

  Suppose that $H\cup \{v\}$ is not decomposed by $K$.  From the
  definition of decomposition, there are two cases:

  \noindent{\bf Case 1:} Condition~\ref{i:d1} fails.  Up to symmetry,
  we suppose that $\{a, b\}$ is a not cutset of $H\cup \{v\}$ that
  separates $x_{ab}$ from $X\sm \{x_{ab}\}$.  Let $Y$ (resp. $Z$) be
  the connected component of $H\sm \{a, b\}$ that contains $x_{ab}$
  (resp. that contains $K' = K\sm (P_{ab}\cup \{a, b\})$).  Hence, $v$
  has a neighbor in $Y$ and a neighbor in $Z$.  Let $Q$ be a shortest
  path in $Y\cup Z \cup \{v\}$ from $x_{ab}$ to some vertex $w$ that
  has a neighbor in $K'$.  Note that $Q$ must go through $v$.  Because
  $K'$ is a tree that covers $X\sm \{x_{ab}\}$, we may apply
  Lemma~\ref{l:linkTree} to $K'$ and $Q$ in $Q \cup K'$. Hence, either we
  find the tree or $w$ has exactly two neighbors in $K'$ that have
  degree 2 in $K'$ and that are adjacent to $c$ or $d$.  Since the
  girth is 6, we may assume up to symmetry that these two neighbors
  are $s_{bc}$ and $s_{ad}$.  Because of the girth 6, $w$ is not
  adjacent to $a$, $b$ and $s_{ab}$.

  If $w$ has a neighbor in $P_{ab}$, we let $P$ be a shortest path
  from $w$ to $x_{ab}$ in $P_{ab} \cup \{w\}$.  Otherwise, we let $P =
  P_{ab}$.  We observe that $P \cup \{a, d, w\} \cup P_{ac} \cup
  P_{ad} \cup P_{bc} \cup P_{bd} \cup P_{cd}$ is a tree that covers
  $X$.

  \noindent{\bf Case 2:} Condition~\ref{i:d1} is satisfied but
  Condition~\ref{i:d2} fails.  Up to symmetry, we suppose that
  $\{s_{ab}\}$ is a not cutset of $H\cup \{v\}$ that separates
  $x_{ab}$ from $X\sm \{x_{ab}\}$.  Let us consider a path $R$ in
  $H\cup \{v\}$ from $x_{ab}$ to some vertex in $K\sm\{P_{ab}\}$ and
  let us suppose $R$ is minimal with respect to this property. Since
  Condition~\ref{i:d1} is satisfied, $R$ must be from $x_{ab}$ to $a$
  or $b$ ($a$ say).  Note that the neighbor of $a$ along $R$ cannot be
  adjacent to $b$ (or there is a cycle on 4 vertices).  We observe
  that $R\cup (K \sm (\{d\} \cup P_{ab}))$ is a tree that covers $X$.
\end{proof}

\section{The $k$-structure}
\label{sec:k}

For $k$-structures, we assume that notation like in the definition is
used.  We put $X= \{x_1, \dots, x_k\}$.  We say that a $k$-structure
$K$ in a graph $G$ \emph{decomposes} $G$ if for all $i$ such that
$1\leq i \leq k$, $\{s_i\}$ is a cutset of $G$ that separates $x_{i}$
from $X\sm \{x_{i}\}$.

\begin{lemma}
  \label{l:k}
  Let $k\geq 5$ be an integer. If a graph $G$ of girth at least $k$
  contains a $k$-structure $K$ with respect to $k$ terminals $x_1,
  \dots, x_k$ then one of the following outcomes holds:
  \begin{itemize}
  \item $K$ decomposes $G$;
  \item $k=6$ and there exists a vertex $v$ of $G\sm K$ such that
    $K\cup\{v\}$ is a $K_4$-structure with respect to $x_1, \dots,
    x_6$;
  \item $G$ contains a tree that covers $X$.
  \end{itemize}
  This is algorithmic in the sense that testing whether $K$ decomposes
  $G$ or outputing the tree or outputing a ralebelling showing that
  $K\cup\{v\}$ is a $K_4$-structure can be performed in time $O(n^4)$.
 \end{lemma}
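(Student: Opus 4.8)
The plan is to mimic the proof of Lemma~\ref{l:k4}. To begin, the first and third outcomes exclude each other: if $K$ decomposes $G$ then any connected induced subgraph of $G$ covering $X$ must contain every $s_i$, since deleting $s_i$ separates $x_i$ from the other terminals; hence it contains the whole cycle $s_1\tp\cdots\tp s_k\tp s_1$ and is not a tree. So it suffices to prove that \emph{at least} one outcome holds. As in Lemma~\ref{l:k4}, I would fix an induced subgraph $H$ with $K\subseteq H\subseteq G$ such that $K$ decomposes $H$ and $H$ is maximal with this property; such $H$ exists because $K$ decomposes $K$, and (the decomposition only concerns the component of $G$ containing $K$) one may take $H$ to meet every component of $G$. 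If $H=G$ we are in the first outcome. Otherwise choose $v\in G\sm H$ with a neighbour in $H$; by maximality $K$ does not decompose $H\cup\{v\}$. Thus everything reduces to the local claim: \emph{if $K$ decomposes $H$ but not $H\cup\{v\}$, then $H\cup\{v\}$ contains a tree covering $X$, or $k=6$ and $K\cup\{v\}$ is a $K_4$-structure with respect to $x_1,\dots,x_k$.} Repeating this for each added vertex and invoking Lemma~\ref{l:linkTree} and a breadth-first search at each step yields the $O(n^4)$ algorithm.

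Before proving the claim I would record two consequences of the girth. First, since $k\geq 5$, every vertex of $G$ has at most one neighbour on the cycle $C=s_1\tp\cdots\tp s_k$: two neighbours $s_a,s_b$ would close a cycle of length $2+d_C(s_a,s_b)\leq 2+\lfloor k/2\rfloor<k$. Second, writing $D_j$ for the component of $H\sm s_j$ containing $x_j$, the decomposition forces $D_j$ to meet $C$ only in $s_j$ (a neighbour $s_m$ of $D_j$ with $m\neq j$ would give a path from $x_j$ to $x_m$ through $s_m$ avoiding $s_j$), and $D_j\cap K=P_j\sm s_j$. Now, as $K$ does not decompose $H\cup\{v\}$, fix an index $i$ for which $\{s_i\}$ fails to separate $x_i$ from $X\sm\{x_i\}$ in $H\cup\{v\}$; the witnessing path must use $v$, because in $H$ the pocket $D_i$ is joined to the rest only through $s_i$.

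Next I would apply Lemma~\ref{l:linkTree}. Deleting $s_i$ splits $K$ into the floating path $P_i\sm s_i$ and an induced tree $T_0$, namely the path $s_{i+1}\tp\cdots\tp s_{i-1}$ together with the pendant paths $P_{i+1},\dots,P_{i-1}$; thus $T_0$ covers the $k-1$ terminals $x_{i+1},\dots,x_{i-1}$. Let $Q$ be a shortest path of $(H\cup\{v\})\sm s_i$ from $x_i$ to a first vertex $w$ having a neighbour in $T_0$ (so no interior vertex of $Q$ has a neighbour in $T_0$). Lemma~\ref{l:linkTree}, applied to $T_0$ and $Q$ with $l=k$, either produces a tree covering $X$ (and we are done) or gives its second outcome. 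In the latter, $w$ has exactly two neighbours in $T_0$, the two ends $\hat s_1,\hat s_{k-1}$ of a path $R\subseteq T_0$ on $k-1$ vertices through all $k-3$ branch-vertices $s_{i+2},\dots,s_{i-2}$ of $T_0$. Reading off $T_0$ (and writing $p_j$ for the neighbour of $s_j$ on $P_j$), each end satisfies $\hat s_1\in\{s_{i+1},p_{i+2}\}$ and $\hat s_{k-1}\in\{s_{i-1},p_{i-2}\}$, while the first girth remark forbids $\hat s_1=s_{i+1}$ and $\hat s_{k-1}=s_{i-1}$ simultaneously.

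The main obstacle is the analysis of this rigid second outcome, and it is here that $k=6$ is singled out. Using the pocket structure one first checks that $w=v$ (a pocket-internal vertex has all its $T_0$-neighbours inside a single $P_m$, too close to span all branch-vertices) and then pins down $N_K(v)$: besides $\hat s_1,\hat s_{k-1}$, the vertex $v$ has one further neighbour lying in $P_i\sm s_i$, the last vertex of $Q$ before $v$. The cycles that $v$ closes with $C$ through the near arc $s_{i+1}\tp s_i\tp s_{i-1}$ then bound $k$: if one of $\hat s_1,\hat s_{k-1}$ lies on $C$ such a cycle has length $6$, so $k\leq 6$; if both are $p_{i+2},p_{i-2}$ the near-arc cycle has length $8$, so $k\leq 8$, and the extra neighbour of $v$ in $P_i\sm s_i$ gives a further $6$-cycle exactly when it is $p_i$, again forcing $k\leq 6$. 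In the symmetric case $k=6$ with $N_K(v)=\{p_i,p_{i+2},p_{i+4}\}$ the three neighbours are evenly spaced around $C$, and $C\cup\{v\}$ is a once-subdivided $K_4$ carrying the six pendant paths: $K\cup\{v\}$ is exactly the announced $K_4$-structure, which has no covering tree. In every other situation --- any $k\neq 6$, or $k=6$ with the third neighbour of $v$ at distance $\geq 2$ from $s_i$ along $P_i$, or a mixed case --- the configuration is asymmetric, and I would exhibit an induced covering tree by deleting $s_i$, deleting the redundant stub of $P_i$ between $s_i$ and $v$, and deleting one interior cycle-vertex, so as to open both remaining cycles while keeping every terminal attached. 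Verifying that these rerouted subgraphs really are induced trees, and that the girth count isolates precisely the evenly-spaced case $k=6$, is the delicate heart of the argument; each construction is explicit and fits the claimed $O(n^4)$ running time.
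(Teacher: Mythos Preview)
Your overall strategy matches the paper's: build $H$ incrementally, apply Lemma~\ref{l:linkTree} to $T_0=K\sm P_i$ and a shortest path $Q$ from $x_i$, and analyse the second outcome. Your identification of the four possible pairs $\{\hat s_1,\hat s_{k-1}\}\subseteq\{s_{i+1},p_{i+2}\}\times\{s_{i-1},p_{i-2}\}$, the exclusion of $\{s_{i+1},s_{i-1}\}$ by a $4$-cycle, and the recognition of the $K_4$-structure when $k=6$ and $N_K(v)=\{p_i,p_{i+2},p_{i+4}\}$ are all correct. Your argument that $w=v$ is also right, though it needs one extra line: a vertex of $Z_i$ lying in \emph{no} pocket $D_j$ has all its $K$-neighbours on the cycle $C$, hence at most one $T_0$-neighbour by your first girth remark.

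There is, however, a genuine gap at the step where you ``pin down $N_K(v)$''. You assert that $v$ has a further neighbour in $P_i\sm s_i$, namely the predecessor of $v$ on $Q$. But that predecessor lies in the pocket $D_i$, which in general is strictly larger than $P_i\sm s_i$ (it may contain vertices of $H\sm K$). So $v$ need not have \emph{any} neighbour on $P_i$, and your subsequent recipe for building a tree (``delete the redundant stub of $P_i$ between $s_i$ and $v$'') does not apply. Concretely, take $N_{T_0}(v)=\{p_{i+2},p_{i-2}\}$ and $N_{P_i}(v)=\emptyset$: your near-arc bound gives only $k\leq 8$, and there is no third neighbour on $P_i$ to force $k\leq 6$ or to anchor a rerouted tree. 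The paper handles exactly this by a direct case split on $N_{P_i}(w)$: when it is empty, $\{w\}\cup K\sm\{s_{i+2}\}$ is already an induced tree covering $X$; when it meets $P_i\sm\{s_i,p_i\}$, a different explicit tree is given; when it equals $\{s_i\}$ a $5$-cycle forces $k=5$; and only when it equals $\{p_i\}$ does the $6$-cycle force $k\in\{5,6\}$ and the $K_4$-structure appear for $k=6$. You should replace the unjustified ``one further neighbour in $P_i\sm s_i$'' claim by this case analysis on $N_{P_i}(v)$; each branch yields an explicit tree except the single symmetric one.
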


\begin{proof}
  Let $H$ be an induced subgraph of $G$ that contains $K$ and such
  that $K$ decomposes $H$ ($H$ exists since $K$ decomposes $K$). We
  show that for any vertex $v$ of $G\sm H$, $H\cup \{v\}$ either
  satisfies the first outcome or is a $K_4$-structure or contains a
  tree covering $X$.  This will prove the theorem by induction and be
  the description of an $O(n^4)$ algorithm since for each~$v$, the
  proof gives the way to actually build the tree or the relabelling by
  calling the algorithm of Lemma~\ref{l:linkTree} and searching the
  graph (with BFS for instance).  Note also that testing whether $K$
  decomposes some graph can be performed in time $O(km)$, or $O(nm)$
  since $k\leq n$, by $k$ checks of connectivity.

  Suppose that $H\cup \{v\}$ is not decomposed by $K$.  Let $Y$
  (resp. $Z$) be the connected component of $H\sm \{s_1\}$ that
  contains $x_{1}$ (resp. that contains $K' = K\sm P_{1}$).  Up to
  symmetry, we may assume that $v$ has a neighbor in $Y$ and a
  neighbor in $Z$.  Let $Q$ be a shortest path in $Y\cup Z \cup \{v\}$
  from $x_1$ to some vertex $w$ that has a neighbor in $K'$.  Note
  that $Q$ must go through~$v$.  Because $K'$ is a tree that covers
  $X\sm \{x_{1}\}$, we may apply Lemma~\ref{l:linkTree} to $K'$ and
  $Q$ in $Q \cup K'$.  Hence, either we find the tree or $w$ has exactly
  two neighbors in $K'$ and $N_{K'}(w)$ must be one of the folowing:
  $\{s_{2}, s_{k}\}$, $\{s_{2}, s'_{k-1}\}$, $\{s'_{3}, s_{k}\}$,
  $\{s'_{3}, s'_{k-1}\}$ where $s'_i$ denotes the neighbor of $s_i$
  along $P_i$.

  When $N_{K'}(w) = \{s_{2}, s_{k}\}$, we observe that $s_2s_1s_kw$ is
  a \emph{square}, i.e. a cycle on 4 vertices, contradicting our
  assumption on the girth.

  When $N_{K'}(w) = \{s_{2}, s'_{k-1}\}$ (or symmetrically $\{s'_{3},
  s_{k}\}$), then $w$ is not adjacent to $s'_1$ (otherwise $s'_1s_1s_2w$ is
  a square).  If $w$ has a neighbor in $P_{1}$, we let $P$ be a
  shortest path from $w$ to $x_{1}$ in $P_{1} \cup \{w\}$.  Otherwise, we
  let $P = P_{1}$.  We observe that $\{w\} \cup P \cup (K' \sm
  \{s_{k-1}\})$ is a tree that covers $X$.

  We are left with the case when $N_{K'}(w) = \{s'_{3}, s'_{k-1}\}$.
  Suppose first that $w$ has no neigbhor in~$P_1$.  Then $\{w\} \cup K
  \sm \{s_3\}$ is a tree that covers $X$.  Suppose now that $w$ has a
  neighbor in $P_1\sm \{s_1, s'_1\}$.  We let $P$ be a shortest path
  from $w$ to $x_{1}$ in $\{w\} \cup (P_{1} \sm \{s_1, s'_1\})$.  If
  $ws_1 \notin E(G)$ then $P \cup \{s_1\} \cup (K\sm (P_1 \cup
  \{s_3\}))$ induces a tree that covers~$X$.  If $ws_1 \in E(G)$ then
  we observe that $P \cup \{s_1\} \cup (K\sm (P_1 \cup \{s_3,
  s_{k-1}\}))$ induces a tree that covers~$X$.  

  So we may assume that $N_{P_1}(w)$ is one of $\{s_1\}$,
  $\{s'_1\}$. If $N_{P_1}(w) = \{s_1\}$ then $s_1ws'_3s_3s_2$ is a
  $C_5$ so $k=5$ because of the girth assumption.  Hence $\{w\} \cup K
  \sm \{s_3, s_4\}$ is a tree that covers $X$.  So we are left with
  the case when $N_{P_1}(w) = \{s'_1\}$.  Then $ws'_1s_1s_2s_3s'_3$ is
  a $C_6$, so $k=5$ or $6$ because of the girth.  If $k=5$ then $\{w\}
  \cup K \sm \{s_3, s_4\}$ is a tree that covers $X$.  If $k=6$ then
  $K\cup \{w\}$ is a $K_4$-structure as shown by the following
  relabelling: $x_{ab} \leftarrow x_1$, $x_{ac} \leftarrow x_3$,
  $x_{ad} \leftarrow x_5$, $x_{bc} \leftarrow x_2$, $x_{bd} \leftarrow
  x_6$, $x_{cd} \leftarrow x_4$, $a \leftarrow w$, $b \leftarrow s_1$,
  $c \leftarrow s_3$, $d \leftarrow s_5$, $s_{ab} \leftarrow s'_1$,
  $s_{ac} \leftarrow s'_3$, $s_{ad} \leftarrow s'_5$, $s_{bc}
  \leftarrow s_2$, $s_{bd} \leftarrow s_6$, $s_{cd} \leftarrow s_4$.
\end{proof}

\section{The main result}
\label{sec:main}

\begin{theorem}
\label{th:main}
Let $k\geq 5$ be an integer.  Let $G$ be a connected graph of girth at
least $k$ and $x_1, \dots, x_k$ be terminals of $G$.  Then one and
only one of the following holds:

\begin{itemize}
\item $G$ contains $k$-structure $K$ with respect to $x_1, \dots, x_k$
  and $K$ decomposes~$G$; 
\item $k=6$, $G$ contains a $K_4$-structure $K$ with respect to $x_1, \dots,
  x_6$ and $K$ decomposes $G$;
\item $G$ contains a tree covering $x_1, \dots, x_k$.
\end{itemize}

This is algorithmic in the sense that we provide an algorithm that
output the tree or the structure certifying that no such tree exists
in time $O(n^4)$.
\end{theorem}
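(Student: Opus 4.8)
The plan is to assemble the three structural lemmas of the preceding sections into a single decision procedure, so that all inductive and case work is inherited from them. I would begin by calling Lemma~\ref{l:firststep} with $l=k$: since $G$ is connected of girth at least $k$ and $x_1,\dots,x_k$ are $k$ terminals, it returns in time $O(n^4)$ either a tree covering $x_1,\dots,x_k$, giving the third outcome, or a $k$-structure $K$ with respect to $x_1,\dots,x_k$. In the latter case I would feed $K$ to Lemma~\ref{l:k} (valid since $k\geq 5$), which in time $O(n^4)$ produces one of: a proof that $K$ decomposes $G$ (first outcome), a covering tree (third outcome), or---only if $k=6$---a vertex $v$ with $K\cup\{v\}$ a $K_4$-structure $K'$ on $x_1,\dots,x_6$. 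In that last situation I would hand $K'$ to Lemma~\ref{l:k4}, which returns either that $K'$ decomposes $G$ (second outcome) or a covering tree (third outcome).

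Before the final call one bookkeeping check is needed: Lemma~\ref{l:k4} requires girth exactly $6$, and indeed $G$ has girth at least $k=6$ while $K'$, being $K_4$ with each edge subdivided once and pending paths attached, contains a $6$-cycle; hence $G$ has girth $6$ and the lemma applies. Since the three lemmas are each invoked a bounded number of times and each runs in $O(n^4)$, the whole procedure is $O(n^4)$ and always reaches one of the three outcomes. This settles the ``at least one'' half of the statement together with the algorithmic claim.

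For the ``at most one'' half, the quickest observation is that a decomposing structure forbids a covering tree, so neither of the first two outcomes coexists with the third. Indeed, if a $k$-structure decomposes $G$, then any connected induced subgraph meeting every $x_i$ must pass through each cutvertex $s_i$, hence contains all of $s_1,\dots,s_k$ and therefore the cycle $s_1s_2\cdots s_ks_1$, so it is not a tree; the corresponding statement for the $K_4$-structure is exactly the argument opening the proof of Lemma~\ref{l:k4}.

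The one point that is not already supplied by a lemma, and which I expect to be the delicate step, is the exclusivity of the \emph{first two} outcomes. Suppose $G$ had both a decomposing $k$-structure $K$ (so $k=6$) and a decomposing $K_4$-structure $K'$. Since $K$ is connected and covers $X$, the exclusion argument opening the proof of Lemma~\ref{l:k4}, applied to $K$, forces $K$ to contain a $6$-cycle $C = i\,s_{ij}\,j\,s_{jl}\,l\,s_{il}\,i$ running through three of the four $K_4$-vertices $i,j,l$ and the subdivision vertices between them. Because a $6$-structure has a \emph{unique} cycle, $C$ must be the cycle of $K$, so the six cutvertices of $K$ are exactly the vertices of $C$. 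Writing $m$ for the fourth $K_4$-vertex, the terminal $x_{jm}\in X$ is covered by $K$ and so hangs from one of those cutvertices; yet $x_{jm}$ reaches the rest of $X$ along $x_{jm}\,s_{jm}\,m\,s_{lm}\,x_{lm}$, a path meeting no vertex of $C$. Hence no cutvertex of $K$ separates $x_{jm}$ from $X\sm\{x_{jm}\}$, contradicting that $K$ decomposes $G$. Checking this rerouting rigorously---in particular that the escape path through $m$ is disjoint from $C$, which rests on Conditions~\ref{i:d1} and~\ref{i:d2} of the $K_4$-structure---is the only genuinely new verification; everything else is a routine assembly of Lemmas~\ref{l:firststep},~\ref{l:k}, and~\ref{l:k4}.
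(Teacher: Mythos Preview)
Your assembly of Lemmas~\ref{l:firststep}, \ref{l:k}, and~\ref{l:k4} is exactly the paper's proof, and the girth-$6$ bookkeeping before invoking Lemma~\ref{l:k4} is a correct detail the paper elides. Where you go further is in the ``only one'' half: the paper's proof does not argue mutual exclusivity at all beyond the parenthetical ``in which case no tree exists'' for the first outcome, and in particular never addresses why a decomposing $6$-structure and a decomposing $K_4$-structure cannot coexist. Your argument for that case---forcing the $6$-structure's unique cycle to coincide with a hexagon $i\,s_{ij}\,j\,s_{jl}\,l\,s_{il}\,i$ of the $K_4$-structure and then escaping from $x_{jm}$ to $x_{lm}$ through the fourth vertex $m$ along paths of $K'$ disjoint from that hexagon---is sound and genuinely fills a gap the paper leaves open.
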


\begin{proof}
  By Lemma~\ref{l:firststep}, we can output a tree covering $X$ or a
  $k$-structure of $G$ in time $O(n^4)$.  If a $k$-structure $K$ is
  ouptut, then by Lemma~\ref{l:k}, we can check whether $K$ decomposes
  $G$ (in which case no tree exists) or find a tree, or find a
  $K_4$-structure $K'$.  In this last case, by Lemma~\ref{l:k4}, we can
  check whether $K'$ decomposes $G$ or find a tree. 
\end{proof}

\begin{theorem}
\label{th:algo}
Let $k\geq 3$ be an integer.  Let $G$ be a connected graph of girth at
least $k$ and $x_1, \dots, x_k$ be vertices of $G$.  Deciding whether
$G$ contains an induced tree covering $x_1, \dots, x_k$ can be
performed in time $O(n^4)$. 
\end{theorem}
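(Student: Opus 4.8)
The plan is to reduce Theorem~\ref{th:algo} to the already-established results of the paper, handling the boundary cases $k=3$ and $k=4$ separately from the main case $k\geq 5$.

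First I would dispose of the small cases by citing prior work. For $k=3$, the three-in-a-tree algorithm of Chudnovsky and Seymour (Theorem~\ref{th:cs}) decides in time $O(n^4)$ whether an induced tree covers three prescribed vertices, with no girth hypothesis needed. For $k=4$, note that a graph of girth at least $4$ is exactly a triangle-free graph, so the result of Derhy, Picouleau and Trotignon (Theorem~\ref{th:dpt}) applies and decides the question in time $O(nm) = O(n^3)$, which is within the claimed bound. Thus both extremal cases are settled directly.

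Next I would treat the case $k\geq 5$ using Theorem~\ref{th:main}. One small preliminary point: Theorem~\ref{th:main} is stated for \emph{connected} graphs whose $k$ prescribed vertices are \emph{terminals}, whereas Theorem~\ref{th:algo} allows arbitrary vertices $x_1, \dots, x_k$ and (implicitly) asks the decision question. As the paper explains in the ``Notation, convention, remarks'' subsection, the $k$-vertices-in-a-tree problem is equivalent to the $k$-terminals-in-a-tree problem: one forms $G'$ by attaching a pendant neighbor $y_i$ to each $x_i$, and an induced tree of $G$ covers the $x_i$ if and only if an induced tree of $G'$ covers the $y_i$. This transformation is linear, preserves the girth (the new pendant edges create no cycles), and makes the target vertices terminals. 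If $G$ is disconnected one may restrict to the component structure: the $x_i$ can be covered by a single induced tree only if they all lie in one connected component, which is checked in linear time, so we may assume $G$ is connected. Having reduced to a connected graph with $k$ terminals, I would simply invoke Theorem~\ref{th:main}, which in time $O(n^4)$ outputs either an induced tree covering the terminals or a $k$-structure (respectively a $K_4$-structure when $k=6$) that decomposes the graph and thereby certifies that no such tree exists.

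Finally I would assemble these pieces into the stated $O(n^4)$ bound: the small cases cost at most $O(n^4)$, the reduction to terminals and connectivity checks cost $O(n+m)$, and the main engine Theorem~\ref{th:main} costs $O(n^4)$; summing over the fixed number of steps leaves the overall complexity at $O(n^4)$. I do not expect a genuine obstacle here, since the substantive structural and algorithmic work has already been carried out in Lemmas~\ref{l:firststep}, \ref{l:k}, and~\ref{l:k4} and packaged in Theorem~\ref{th:main}; the only care needed is the bookkeeping of the vertices-versus-terminals reduction and making sure the complexity of the $k=4$ case is correctly bounded within $O(n^4)$. The mild subtlety worth stating explicitly is that Theorem~\ref{th:main} provides not merely a decision but a certificate (the decomposing structure) in the negative case, so correctness of the decision follows immediately from the fact that a decomposing $k$- or $K_4$-structure is an obstruction to the existence of a covering induced tree.
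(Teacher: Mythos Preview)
Your proposal is correct and follows essentially the same approach as the paper: cite Theorem~\ref{th:cs} for $k=3$, Theorem~\ref{th:dpt} for $k=4$, and Theorem~\ref{th:main} for $k\geq 5$. Your extra bookkeeping on the vertices-to-terminals reduction is appropriate (the paper relegates it to the ``Notation, convention, remarks'' subsection), though your discussion of disconnected $G$ is unnecessary since the hypothesis of Theorem~\ref{th:algo} already assumes $G$ connected.
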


\begin{proof}
  Follows from Theorem~\ref{th:cs} for $k=3$, from
  Theorem~\ref{th:dpt} for $k=4$ and from Theorem~\ref{th:main} for
  $k\geq 5$.
\end{proof}

\noindent{\bf Remark:} In all the proofs above for $k\geq 5$, we use
very often that the input graph contains no triangle and no square.
Forbidding longer cycles is used less often.  This suggests that the
$k$-in-a-tree problem might be polynomial for graphs with no triangle
and no square.  We leave this as an open question.

\end{document}